\newcommand{\blind}{0}
\newtheorem{theorem}{Theorem}[section]
\newtheorem{lemma}{Lemma}[section]
\newcommand{\I}{{\mathbf I}}
\newcommand{\X}{{\mathbf X}}
\newcommand{\x}{{\mathbf x}}
\newcommand{\Y}{{\mathbf Y}}
\newcommand{\y}{{\mathbf y}}
\newcommand{\Z}{{\mathbf Z}}
\newcommand{\z}{{\mathbf z}}
\newcommand{\U}{{\mathbf U}}
\newcommand{\bu}{{\mathbf u}}
\newcommand{\V}{{\mathbf V}}
\newcommand{\bv}{{\mathbf v}}
\newcommand{\w}{{\mathbf w}}
\newcommand{\E}{{\rm E}}
\newcommand{\Var}{{\rm Var}}
\newcommand{\bmu}{{\boldsymbol \mu}}
\newcommand{\tr}{{\text{\rm tr}}}
\begin{document}

\def\spacingset#1{\renewcommand{\baselinestretch}%
{#1}\small\normalsize} \spacingset{1}


\if0\blind
{
  \title{\bf Testing the independence of two random vectors where only one dimension is large}
  \author{Weiming Li\thanks{
    Weiming Li's research is supported by National Natural Science Foundation of China, No. 11401037, and Fundamental Research Funds for the Central Universities, No. 2014RC0905.}\hspace{.2cm}\\
    Beijing University of Posts and Telecommunications\\
    and \\
    Jiaqi Chen\thanks{
    Jiaqi Chen's research is supported by Program for Innovation Research of Science in Harbin Institute of Technology.}\hspace{.2cm} \\
    Harbin Institute of Technology\\
    and \\
    Jianfeng Yao\thanks{
    Corresponding author}\hspace{.2cm} \\
    The University of Hong Kong\\
     }
  \maketitle
} \fi

\if1\blind
{
  \bigskip
  \bigskip
  \bigskip
  \begin{center}
    {\LARGE\bf Testing the independence of two random vectors where only one dimension is large}
\end{center}
  \medskip
} \fi

\bigskip
\begin{abstract}
For testing the independence of two vectors with respective dimensions $p_1$ and $p_2$, the existing literature in high-dimensional statistics all assume that both dimensions $p_1$ and $p_2$ grow to infinity with the sample size. However, as evidenced in the RNA-sequencing data analysis discussed in the paper, it happens frequently that one of the dimension is quite small and the other quite large compared to the sample size. In this paper, we address this new asymptotic framework for the independence test. A new test procedure is introduced and its asymptotic normality is established when the vectors are normal distributed. A Mote-Carlo study demonstrates the consistency of the procedure and exhibits its superiority over some existing high-dimensional procedures. Applied to the RNA-sequencing data mentioned above, we obtain very convincing results on pairwise independence/dependence of gene isoform expressions as attested by prior knowledge established in that field. Lastly, Monte-Carlo experiments show that the procedure is robust against the normality assumption on the population vectors.

\end{abstract}

\noindent%
{\it Keywords:}  Covariance matrix; Gene network; High-dimensional testing; Independence test.
\vfill

\newpage
\spacingset{1.45} 
\section{Introduction}
\label{sec:intro}

Modern scientific researches increasingly encounter high dimensional data and then evoke corresponding statistical analyses.
In genomics, next-generation sequencing techniques such as
RNA-Sequencing \citep{Feng13} are designed to quantify gene
expression, where typically a group of gene isoforms are analyzed and
their expression data at exon levels are recorded into
multidimensional vectors. The dimensions of these vectors vary in a
wide range where the smallest dimension can be one or two and the
largest one can be comparable to the sample size (see Table 3).  A fundamental issue in such analyses is determining whether there is any interaction between two given gene isoforms. More formally, this problem involves testing the independence of two possibly correlated vectors in a situation where one dimension is small but the other is large compared to the sample size.

Generally, let
$\X=(X_1,\ldots,X_{p_1})$,
$\Y=(Y_1,\ldots,Y_{p_2})$
and $\Z=(\X,\Y)$  be
the joint vector of dimension $p:=p_1+p_2$.
The covariance matrix  of $\Z$ is  partitioned as
\begin{equation*}\label{partition}
  \Sigma=
  \left(\begin{matrix}
      \Sigma_{xx}&\Sigma_{xy}\\
      \Sigma_{yx}&\Sigma_{yy}
    \end{matrix}\right)
\end{equation*}
so that $\Sigma_{xx}=Var(\X)$, $\Sigma_{yy}=Var(\Y)$ and
$\Sigma_{xy}=Cov(\X,\Y)$.
Let $\z_1,\ldots,\z_N$ be a sample of size $N$
drawn from the population $\Z$.
The sample covariance matrix is
\begin{eqnarray*}\label{sn}
  S_n=\frac{1}{n}\sum_{k=1}^N(\z_k-\bar\z)(\z_k-\bar\z)'
\end{eqnarray*}
where $\bar\z=\frac{1}{N}\sum_{k=1}^N\z_k$ and $n=N-1$ represents the degree of freedom.
Accordingly, $S_n$ can be partitioned as
\[
S_n=\left(\begin{matrix}
    S_{xx}&S_{xy}\\
    S_{yx}&S_{yy}
  \end{matrix}\right).
\]
Assume that the joint vector $\Z$
has a  $p$-dimensional normal distribution with mean $\bmu$
and covariance matrix $\Sigma$,
the independence hypotheses of $\X$ and $\Y$ can be represented as
\begin{equation}\label{hyp}
H_0: \Sigma_{xy}=0\quad v.s.\quad H_1: \Sigma_{xy}\neq 0.
\end{equation}
To test these hypotheses,  the following three statistics are
commonly used \citep{Anderson03}, which are the likelihood ratio test (LRT) and two trace criteria:
\begin{eqnarray}
 &&\Lambda=\frac{\sup_{H_0} L(\bmu,\Sigma)}{\sup L(\bmu,\Sigma)}
    =\frac{|S_n|^{N/2}}{|S_{xx}|^{N/2}|S_{yy}|^{N/2}}=|\I_{p_1}-S_{xy}S_{yy}^{-1}S_{yx}S_{xx}^{-1}|^{\frac{N}{2}},\nonumber\\
 &&  C_1=\tr( S_{xy}S_{yy}^{-1}S_{yx}S_{xx}^{-1})\quad\text{and}\quad C_2=\tr (S_{xy}S_{yx}) - \frac1n \tr(S_{xx})  \tr(S_{yy}).\label{t12}
\end{eqnarray}
The LRT statistic is the well-known  Wilks's $\Lambda$
\citep{W35}. Both  statistics $C_1$  and $C_2$ are  based on the idea that under
the independence hypothesis,  $\Sigma_{xy}=\Sigma'_{yx}=0$ so that
$S_{xy}$ as well as $S_{yx}$ should be
small. A noticeable difference here is that the statistics
$\Lambda$ and $C_1$ rely on the inverse matrices
$S_{xx}^{-1}$ and $S_{yy}^{-1}$ so that essentially the conditions
$p_i<n$ are required.
Conversely, the criterion $C_2$ can be applied when the dimensions
$p_i$, $i=1,2$, are  larger than the sample size $N$.

The test procedures for the classical situation where
the dimensions $p_i$'s are reasonably small compared with the the sample size are well studied \citep{Anderson03}.
It is however well understood today that
these asymptotical approximations are no more valid  when the dimensions
are comparable to the sample size, see e.g.  \citet{LW02}, \citet{BJYZ09},
\citet{ChenQin10} and \citet{WY13}.
New limiting distributions have to be found in the
large-dimensional context.

Specifically for the independence test, the existing literature in the
large-dimensional context includes
\begin{enumerate}
\item
  the
  large-dimensional limit of $\Lambda$ proposed
  in \citet{JBZ13} under the asymptotic scheme
  $\min(p_1,p_2,n)\to \infty$, $p_1+p_2<n$ and  $p_i/n\to c_i>0$;
\item
  the large-dimensional limit of $C_1$  proposed
  in \citet{JBZ13} under the  asymptotic scheme
  $\min(p_1,p_2,n)\to \infty$, $\max(p_1,p_2)<n$ and  $p_i/n\to
  c_i>0$; and
\item
  the  large-dimensional limit of $C_2$  proposed
  in \citet{SR12}  under the  asymptotic scheme
  $\min(p_1,p_2,n)\to \infty$,   $p_i/p\to
  d_i>0$ and   $n=O(p^\delta)$ for some constant $\delta>0$
  as $n\rightarrow\infty$.
\end{enumerate}
 These existing  asymptotic schemes
are quite similar in that they all require
that both dimensions $p_1$ and $p_2$ grow  to infinity
with the sample size $N$.

Motivated by RNA-sequencing analysis, our objective in this paper is to test the hypotheses in \eqref{hyp} with the criterion $C_2$ assuming $p_1$ fixed and $(p_2,n)\rightarrow\infty$. As far as we know, this scheme has not been addressed in the literature. It will be proved that the asymptotic distribution of the statistic exists under this asymptotic scenario and is consistent with the one in \cite{SR12}. Note that our proof is different from theirs and this new asymptotic scenario is not covered by their results.

The rest of this paper is organised as follows. In the next section, we present the new test procedure and examine its size and power through simulation experiments. Section 3 presents an analysis of a genomic data set and Section 4 presents some conclusions and remarks. The main theorem is proved in the last section.

\section{Test for the independence in high dimensions}
\label{sec:test}
\subsection{Test statistic and its asymptotic distribution}

The null hypothesis in \eqref{hyp} is equivalent to $\tr(\Sigma_{xy}\Sigma_{yx})=0.$ Thus we may construct an unbiased estimator of this trace and reject the null hypothesis when this statistic is too large. Let
\begin{equation*}\label{gamma}
\gamma_{2}=\tr(\Sigma^2),\quad \gamma_{xx}=\tr(\Sigma_{xx}^2), \quad \gamma_{yy}=\tr(\Sigma_{yy}^2),\quad \gamma_{xy}=\tr(\Sigma_{xy}\Sigma_{yx}).
\end{equation*}
We have by definition $2\gamma_{xy}=\gamma_{2}-\gamma_{xx}-\gamma_{yy}$. From \cite{Srivastava05}, an unbiased estimator of $\gamma_2$ is given as $k_n[\tr(S^2_n)-\tr^2(S_n)/n]$ with $k_n=n^2/(n-1)(n+2)$. Therefore an unbiased estimator of $\gamma_{xy}$ is constructed as
\begin{eqnarray*}
\hat \gamma_{xy}&=&\frac{k_n}{2}\left\{\tr(S^2_n)-\tr(S_{xx}^2)-\tr(S_{yy}^2)
-\frac{1}{n}\left[\tr^2(S_n)-\tr^2(S_{xx})-\tr^2(S_{yy})\right]\right\},\nonumber\\
&=&k_n\left[\tr(S_{xy}S_{yx})-\frac{1}{n}\tr(S_{xx})\tr(S_{yy})\right].\label{test}
\end{eqnarray*}
We thus get the trace criterion $C_2$ given in \eqref{t12}. Notice that the estimator $\hat \gamma_{xy}$ is a function of eigenvalues of the sample covariance matrices $S_{xx}$, $S_{yy}$, and $S_n$.

\begin{theorem}\label{th1}
Suppose that the dimensions $p=p_1+p_2$ and $n$ both tend to infinity, and
$$0<\lim_{p\rightarrow\infty}\frac{1}{p} \tr(\Sigma^k)< \infty,\ k=1,2,4.$$
Then under the null hypothesis in \eqref{hyp},
\begin{equation}\label{clt}
T_n:=\frac{n}{\sqrt{2k_n}}\frac{\hat \gamma_{xy}}{\sqrt{\hat \gamma_{xx}\hat \gamma_{yy}}}\xrightarrow{d} N(0,1),
\end{equation}
where $\hat \gamma_{xx}=k_n[\tr(S^2_{xx})-\tr^2(S_{xx})/n]$ and $\hat \gamma_{yy}=k_n[\tr(S^2_{yy})-\tr^2(S_{yy})/n]$ with $k_n=n^2/(n-1)(n+2)$.
\end{theorem}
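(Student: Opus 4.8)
The plan is to reduce the statement to a central limit theorem for a single degenerate $U$-statistic and then to self-normalize. First I would strip off the sample mean by a Helmert-type orthogonal transformation: there is an $N\times N$ orthogonal matrix with constant first row for which the remaining transformed vectors $\bu_1,\dots,\bu_n$ are i.i.d.\ $N_p(\mathbf 0,\Sigma)$ and satisfy $nS_n=\sum_{k=1}^n\bu_k\bu_k'$, together with the analogous block identities for $S_{xx},S_{yy},S_{xy}$. Write $\bu_k=(\x_k,\y_k)$ with $\x_k\in\mathbb R^{p_1}$ and $\y_k\in\mathbb R^{p_2}$; under $H_0$ the matrix $\Sigma$ is block-diagonal, so $\x_k$ and $\y_k$ are independent. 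Substituting $S_{xy}=n^{-1}\sum_k\x_k\y_k'$ and its companions gives the exact decomposition
\[
\frac{\hat\gamma_{xy}}{k_n}=U_n+R_n,\qquad U_n=\frac1{n^2}\sum_{k\ne l}(\x_k'\x_l)(\y_k'\y_l),\qquad R_n=\frac1{n^2}\sum_{k=1}^n(\alpha_k-\bar\alpha)(\beta_k-\bar\beta),
\]
where $\alpha_k=\|\x_k\|^2$, $\beta_k=\|\y_k\|^2$ and $\bar\alpha,\bar\beta$ are their averages. Here $U_n$ is a degenerate $U$-statistic with mean zero (by independence of $\x_k,\y_k$), while $R_n$ is $n^{-1}$ times the sample covariance of the independent sequences $\{\alpha_k\}$ and $\{\beta_k\}$. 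Using $\Var(\alpha_k)=2\tr(\Sigma_{xx}^2)$ and $\Var(\beta_k)=2\tr(\Sigma_{yy}^2)$ I would show $\Var(R_n)=O(n^{-3}\gamma_{xx}\gamma_{yy})$, smaller by a factor $n^{-1}$ than $\Var(U_n)\asymp n^{-2}\gamma_{xx}\gamma_{yy}$; hence $R_n$ is asymptotically negligible after the scaling in $T_n$, and it suffices to prove a CLT for $U_n$.

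For $U_n$ I would use the martingale central limit theorem along the filtration $\mathcal F_l=\sigma(\bu_1,\dots,\bu_l)$. Writing $U_n=\sum_{l=2}^n D_l$ with $D_l=\tfrac2{n^2}\sum_{k<l}(\x_k'\x_l)(\y_k'\y_l)$, the sequence $\{D_l\}$ is a martingale-difference array: since $\x_l$ and $\y_l$ are independent and centered, $\E[(\x_k'\x_l)(\y_k'\y_l)\mid\mathcal F_{l-1}]=(\x_k'\,\E\x_l)(\y_k'\,\E\y_l)=0$ for $k<l$. The conditional variances are
\[
\E[D_l^2\mid\mathcal F_{l-1}]=\frac4{n^4}\sum_{k,k'<l}(\x_k'\Sigma_{xx}\x_{k'})(\y_k'\Sigma_{yy}\y_{k'}),
\]
and summing the diagonal ($k=k'$) contributions and taking expectations yields $\sum_{l}\E[D_l^2]=2n^{-2}\gamma_{xx}\gamma_{yy}\,(1+o(1))$, which is exactly the variance targeted by the factor $n/\sqrt{2k_n}$ together with the denominator $\sqrt{\gamma_{xx}\gamma_{yy}}$, so the limiting variance of $T_n$ is one. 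It then remains to verify the two hypotheses of the martingale CLT: that $\sum_l\E[D_l^2\mid\mathcal F_{l-1}]$ concentrates at its mean (the off-diagonal $k\ne k'$ terms being negligible in probability), and a conditional Lyapunov bound $\sum_l\E[D_l^4]=o\big((n^{-2}\gamma_{xx}\gamma_{yy})^2\big)$. This is precisely where the moment conditions $0<\lim p^{-1}\tr(\Sigma^k)<\infty$, $k=1,2,4$, are used: expanding these fourth-order Gaussian expectations produces the traces $\tr(\Sigma_{xx}^2),\tr(\Sigma_{yy}^2),\tr(\Sigma_{xx}^4),\tr(\Sigma_{yy}^4)$, each $O(p)$ under the assumptions, and one checks that the cross terms carry strictly smaller powers. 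Establishing these fourth-moment and concentration estimates is the main obstacle of the proof.

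Finally I would handle the self-normalization. The quantities $\hat\gamma_{xx}$ and $\hat\gamma_{yy}$ are the Srivastava unbiased estimators of $\gamma_{xx}=\tr(\Sigma_{xx}^2)$ and $\gamma_{yy}=\tr(\Sigma_{yy}^2)$; by the ratio-consistency established in \citet{Srivastava05} one has $\hat\gamma_{xx}/\gamma_{xx}\xrightarrow{P}1$ and $\hat\gamma_{yy}/\gamma_{yy}\xrightarrow{P}1$ (and when $p_1$ is fixed the first is immediate from $S_{xx}\xrightarrow{P}\Sigma_{xx}$). Combining the CLT $\tfrac{n}{\sqrt{2k_n}}\,\hat\gamma_{xy}/\sqrt{\gamma_{xx}\gamma_{yy}}\xrightarrow{d}N(0,1)$ from the previous step with these ratios through Slutsky's theorem gives $T_n\xrightarrow{d}N(0,1)$, as claimed.
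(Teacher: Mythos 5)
Your proposal is correct in outline, and the orders you claim for the key estimates do check out, but it takes a genuinely different route from the paper. Both arguments begin the same way (passing to an $n$-degree-of-freedom Wishart representation with i.i.d.\ $N(0,\Sigma)$ vectors, which is your Helmert step), but then they diverge. The paper goes ``dual'': it diagonalizes $\Sigma_{xx}$ and $\Sigma_{yy}$ and writes $\X'\X=\sum_{i=1}^{p_1}\alpha_i\bu_i\bu_i'$ and $\Y'\Y=\sum_{j=1}^{p_2}\beta_j\bv_j\bv_j'$ with i.i.d.\ standard normal $n$-vectors indexed by the \emph{coordinates}, so that $n\hat\gamma_{xy}/k_n=\sum_{i,j}\alpha_i\beta_j\,\psi(\bu_i,\bv_j)$ for a single scalar kernel $\psi(\x,\y)=n^{-1}(\x'\y)^2-n^{-2}(\x'\x)(\y'\y)$; the martingale CLT is then applied along $j=1,\dots,p_2$, with all moment computations packaged into one lemma about $\psi$ evaluated at standard normals. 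You instead stay ``primal'': you split $\hat\gamma_{xy}/k_n$ into a degenerate $U$-statistic over the \emph{observations} plus the negligible remainder $R_n$, and run the martingale over the sample index $l=1,\dots,n$. The paper's choice makes the novel asymptotic regime transparent --- the martingale has $p_2$ terms, so only $p_2\to\infty$ is needed, and the covariance structure enters only through the scalar weights $a_{ij}=\alpha_i\beta_j$, which is why its variance and Lyapunov computations are essentially one-line consequences of the $\psi$-lemma; the price is that the spectral decoupling into independent $\bu_i,\bv_j$ is wedded to Gaussianity. Your Chen--Qin-style decomposition is structurally closer to arguments that could survive relaxation of normality, and it isolates cleanly where the fourth-trace condition is used (to show $\tr(\Sigma_{yy}^4)=o\bigl(\gamma_{yy}^2\bigr)$ in the concentration of the predictable quadratic variation), but the concentration and Lyapunov bounds you defer as ``the main obstacle'' genuinely require the multi-index Gaussian moment expansions you only gesture at; in the paper these are precisely the computations absorbed into Lemma 6.1. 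Your variance bookkeeping ($\Var(U_n)\sim 2n^{-2}\gamma_{xx}\gamma_{yy}$, $\Var(R_n)=O(n^{-3}\gamma_{xx}\gamma_{yy})$, and the exact total $\Var(\hat\gamma_{xy})=2k_nn^{-2}\gamma_{xx}\gamma_{yy}$) agrees with the paper's $s_n^2$, and the Slutsky step at the end is the same in both proofs.
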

This theorem is built on a general dimensional scenario as only the assumption $p_1+p_2\rightarrow\infty$ is required. This scenario integrates two cases: 1) $p_1$ is fixed and only $p_2$ approaches infinity; 2) $p_1$ and $p_2$ both tend to infinity. Under the second case, the conclusion in \eqref{clt} is essentially the same as the main theorem in \cite{SR12}. This means that for practical applications,  the proposed test is robust against different asymptotic scenarios of dimensions. Such robustness is especially welcomed since in a precise application (such as the gene isoform data analyzed in the paper) the explicit values of the dimensions $p_1$ and $p_2$ are known and it is somehow difficult to decide what is the most convenient asymptotic scenario to use.

\subsection{Monte-Carlo study}

We numerically evaluate the finite-sample performance of the test $T_n$ and report the empirical size and power under different dimension settings. For the purpose of comparison, we also consider two tests discussed in \cite{JBZ13}: one is the corrected LRT, referred as $T_1$, and the other is based on $\tr(S_{xy}S_{yy}^{-1}S_{yx}S_{xx}^{-1})$, referred as $T_2$. Since the test $T_{1}$ is limited to $p_1+p_2<n$ and $T_2$ is limited to $\max\{p_1, p_2\}<n$, we only consider the former case when comparing the three tests. The nominal significance level is fixed at $\alpha=0.05$, and the number of independent replications is 100, 000.

We first report the empirical sizes of the three tests. Samples are drawn from standard normal population, and thus $\Sigma$ is an identity matrix. The dimensions are $p_1=2,6,10$, $p_2=10,30,100,200,500$, and $n=50$. The results are collected in Table \ref{table1}, where the first six columns compare the sizes of the three tests when $p_1+p_2<n$ and the last three columns illustrate the size of the proposed $T_n$ when $p_2>n$. The results show that all the empirical sizes are close to the nominal significance level.

\begin{table}[h] \setlength\tabcolsep{6pt}
\begin{center}
\caption{Empirical sizes in percents for the three tests with the significant level $\alpha=0.05$.}
\begin{tabular}{cccccccccccccccccccc}
\hline
        $p_1$    & \multicolumn{3}{c}{$p_2=10$} &&\multicolumn{3}{c}{$p_2=30$} &&\multicolumn{3}{c}{$T_n\& p_2$}\\
                 & $T_{n}$ &$T_{1}$&$T_{2}$&&$T_{n}$ &$T_{1}$&$T_{2}$&&100&200&500\\
            \cline{2-4} \cline{6-8}\cline{10-12}
       $2$    &6.32&6.56&5.86&&5.72&6.17&4.48&&5.52 &5.34 &5.30\\
       $6$    &5.89&6.11&5.37&&5.66&5.88&4.69&&5.46 &5.21 &5.29\\
       $10$  &5.74&6.03&5.27&&5.46&5.90&4.70&&5.36 &5.09 &5.16\\
\hline
\end{tabular}
\label{table1}
\end{center}
\end{table}

To examine the powers of the three tests, we employ a model studied in \cite{JBZ13}, where the populations $\X$ and $\Y$ are defined as
\begin{equation*}
\X=\U_1+\gamma \U_2^{p_1},\quad \Y=\U_2+\gamma \U_2,\quad \U_i\sim N(0, \I_{p_i}),\quad i=1,2,
\end{equation*}
respectively, where $\U_1$ and $\U_2$ are independent, $\U_2^{p_1}$ is a subset of $\U_2$ consisting of its first $p_1$ variables, and the factor $\gamma$ represents the degree of mixture.
Therefore, the covariance matrices are respectively
$$
\Sigma_{xx}=(1+\gamma^2)I_{p_1},\quad \Sigma_{yy}=(1+\gamma)^2I_{p_2}, \quad \Sigma_{xy}=\gamma(1+\gamma)(I_{p_1}, O_{p_1,p_2-p_1}),
$$
where $O_{m,n}$ represents an $m\times n$ zero matrix.

\begin{figure}[h]
\begin{minipage}[t]{0.5\linewidth}
\includegraphics[width=2.5in]{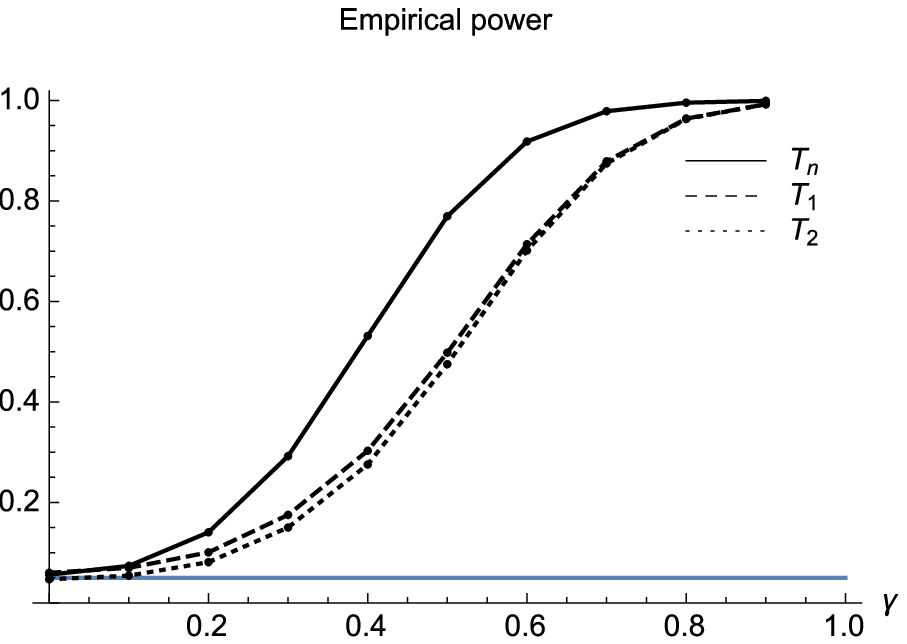}
\end{minipage}%
\begin{minipage}[t]{0.5\linewidth}
\includegraphics[width=2.5in]{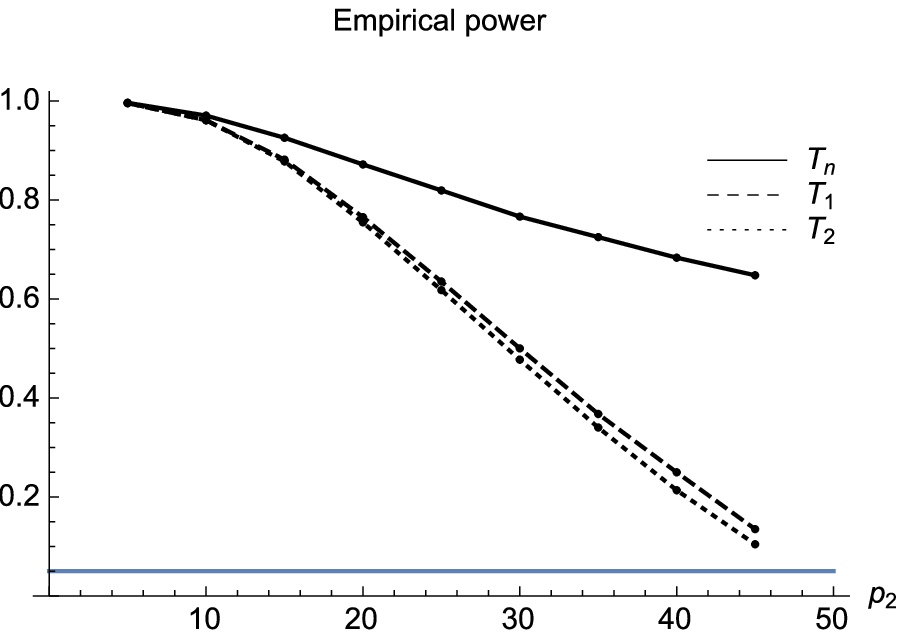}
\end{minipage}
\caption{Empirical powers of the three tests. The parameter settings are $(p_1,p_2,n)=(4,30,50)$, $0\leq \gamma\leq 0.9$ in the left panel, and $(p_1,n, \gamma)=(4,50, 0.5)$, $5\leq p_2\leq 45$ in the right panel.}
\label{fig1}
\end{figure}

Figure \ref{fig1} illustrates the powers of the three tests for this model. In the left panel, the parameters are $(p_1,p_2,n)=(4,30,50)$ and the factor $\gamma$ increases from 0 to 0.9; while on the right, $(p_1,n,\gamma)=(4,50,0.5)$ and $p_2$ increases from 5 to 45. The curves in the figure show that the powers of the tests $T_1$ and $T_2$ are similar, and are dominated by the proposed test $T_n$ in all the settings. Particularly, the curves in the right panel show that all the powers of the tests decrease as $p_2$ increases, which reflects the fact that in this process the increasing number of zero entries of $\Sigma_{xy}$ makes it closer to the zero matrix of the null hypothesis. However, the power of $T_n$ declines much slower than $T_1$ and $T_{2}$, which demonstrates a greater robustness of $T_n$ against the inflating $p_2$.

Next we examine the robustness of the three test procedures when the assumed normal distribution of the vectors is contaminated by gamma-distributed errors. The studied model is the same as the previous except that the vector $\U_i$'s are replaced by
$$\U_i+\theta\V_i,\quad
\V_i=(v_{i1},\dots,v_{ip_i})',\quad i=1,2,
$$
where $\{v_{ij}\}$, independent of $\{\U_i\}$, are $i.i.d.$ standardized random variables derived from $Gamma(a,b)$ distributed variables and the parameter $\theta$ represents the level of contamination. The new parameters are set to be $a=b=3$ (positive skew, heavy-tailed) and $ \theta=1/2, 2$ in this experiment. Thus the covariance matrices become
$$
\Sigma_{xx}=(1+\gamma^2)(1+\theta^2)I_{p_1},\quad \Sigma_{yy}=(1+\gamma)^2(1+\theta^2)I_{p_2}, \quad \Sigma_{xy}=\gamma(1+\gamma)(1+\theta^2)(I_{p_1}, O_{p_1,p_2-p_1}).
$$

Results about the empirical sizes and powers of the tests are collected in Table \ref{table2} and Figure \ref{fig2}, respectively. It shows that all the sizes are close to the nominal one and the power curves are quite similar to those in Figure \ref{fig1}, which demonstrate that the additional gamma-distributed errors have little impact on the three tests. It is however worth noticing that the theoretic proof of Theorem \ref{th1} in this paper as well as the proofs for asymptotic normality of the test criteria $T_1$ and $T_2$ established in \cite{JBZ13} all heavily rely on the assumed normality of the vectors, and to our best knowledge, it seems unclear how these proofs can be extended to cover non-normal data as the ones tested  in the Monte-Carlo experiments reported here.

\begin{table}[h] \setlength\tabcolsep{6pt}
\begin{center}
\caption{Empirical sizes in percents for the three tests with the significant level $\alpha=0.05$.}
\begin{tabular}{cccccccccccccccccccc}
\hline
          &$p_1$    & \multicolumn{3}{c}{$p_2=10$} &&\multicolumn{3}{c}{$p_2=30$} &&\multicolumn{3}{c}{$T_n\& p_2$}\\
          &       & $T_{n}$ &$T_{1}$&$T_{2}$&&$T_{n}$ &$T_{1}$&$T_{2}$&&100&200&500\\
            \cline{3-5} \cline{7-9}\cline{11-13}
                                    &$2$    &6.38&6.56&5.90&&5.86&6.19&4.45&&5.55 &5.32 &5.22\\
    $\theta=\frac{1}{2}$ &$6$    &5.91&6.17&5.44&&5.47&5.84&4.60&&5.34 &5.24 &5.16\\
                                    &$10$  &5.71&5.94&5.18&&5.55&5.80&4.80&&5.33 &5.12 &5.22\\
\noalign{\smallskip}\hline\noalign{\smallskip}
                                    &$2$    &6.38&6.52&5.80&&6.00&6.38&4.62&&5.59 &5.59 &5.33\\
    $\theta=2$               &$6$    &6.02&6.15&5.49&&5.65&5.79&4.67&&5.42 &5.42 &5.22\\
                                    &$10$  &5.85&6.03&5.27&&5.68&5.82&4.84&&5.35 &5.33 &5.06\\
\hline
\end{tabular}
\label{table2}
\end{center}
\end{table}

\begin{figure}[h]
\begin{minipage}[t]{0.5\linewidth}
\includegraphics[width=2.5in]{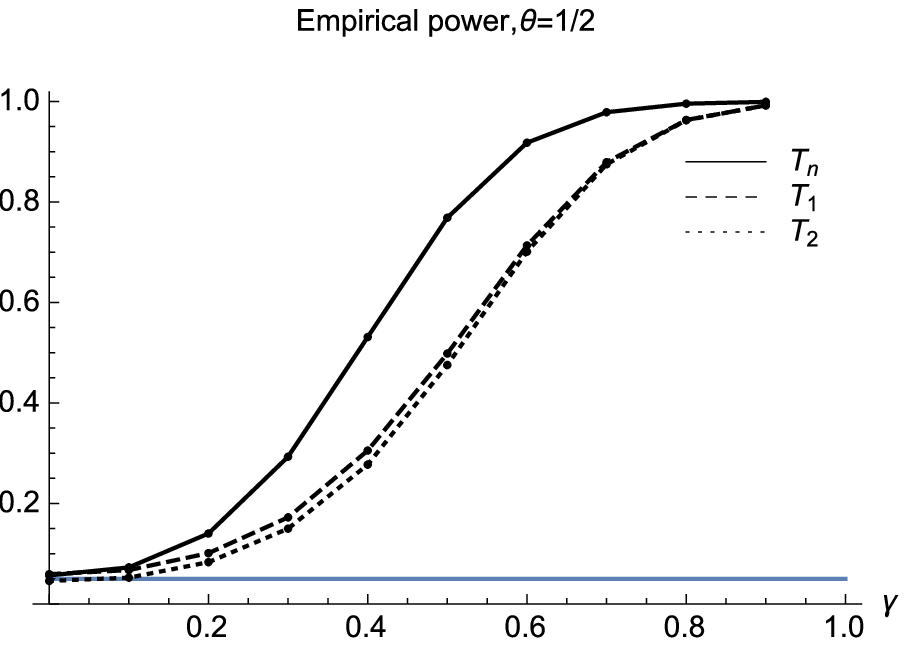}
\end{minipage}%
\begin{minipage}[t]{0.5\linewidth}
\includegraphics[width=2.5in]{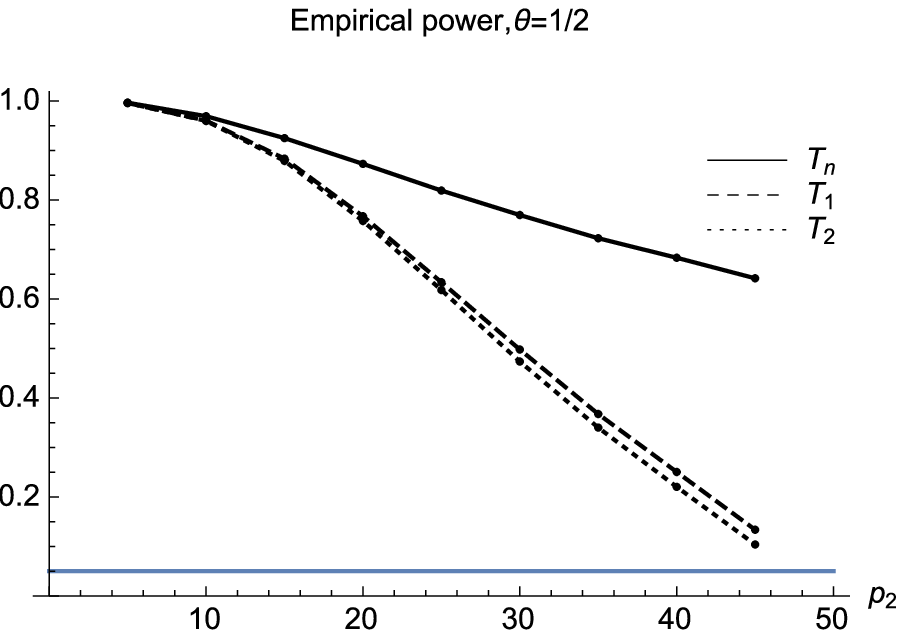}
\end{minipage}
\begin{minipage}[t]{0.5\linewidth}
\includegraphics[width=2.5in]{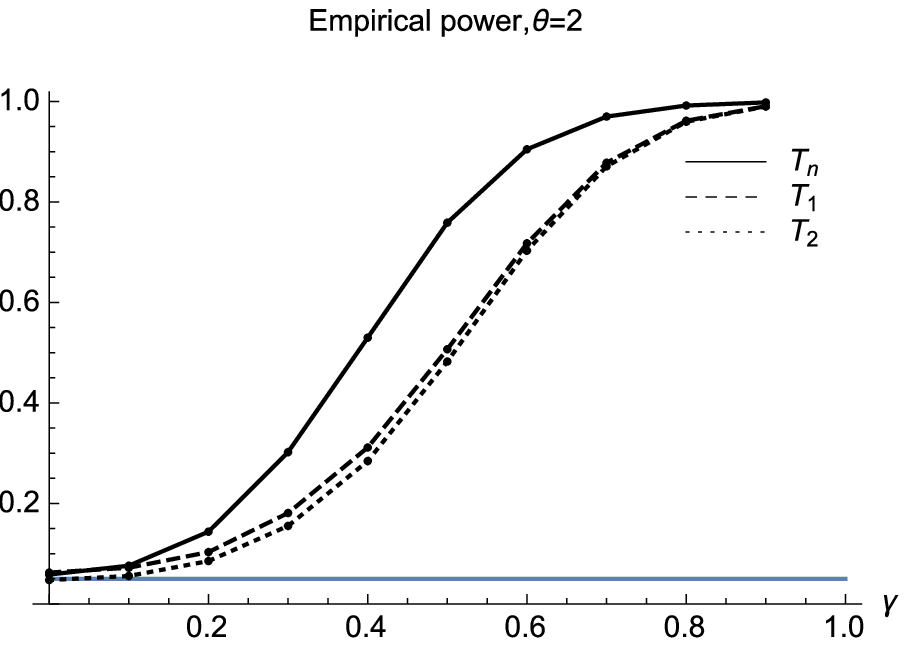}
\end{minipage}%
\begin{minipage}[t]{0.5\linewidth}
\includegraphics[width=2.5in]{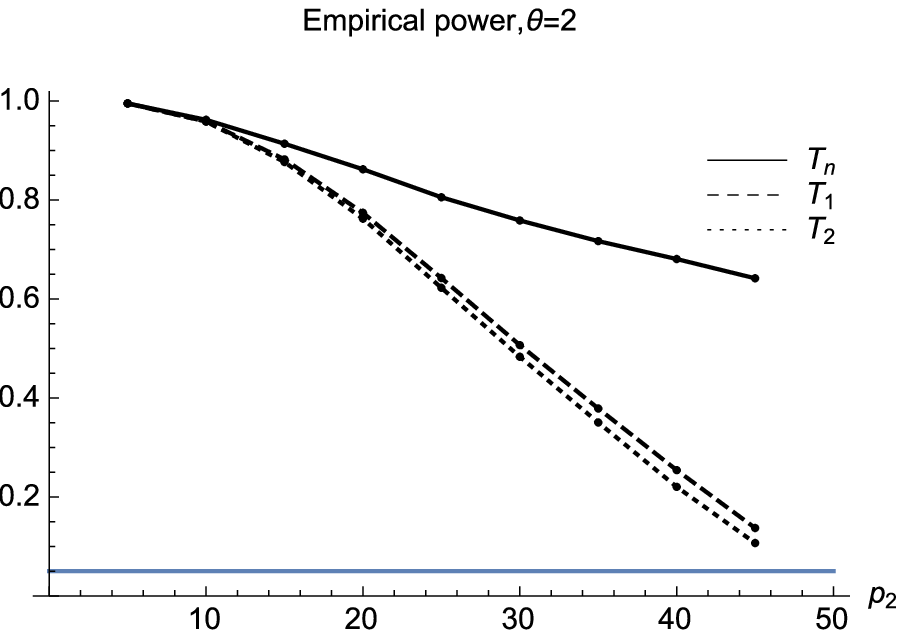}
\end{minipage}
\caption{Empirical powers of the three tests for the non-normal distribution with $\theta=1/2, 2$. The parameter settings are $(p_1,p_2,n)=(4,30,50)$, $0\leq \gamma\leq 0.9$ in the left panel, and $(p_1,n, \gamma)=(4,50, 0.5)$, $5\leq p_2\leq 45$ in the right panel.}
\label{fig2}
\end{figure}

\section{Real data analysis}
\label{sec:realdata}
Genomes play a central role in the control of cellular processes \citep{BO04}. The dynamic interplay between various genes can be mapped as gene co-expression networks, which is an important and widely used method to understand the cause and prognosis of various diseases. To recover pairwise dependencies in a gene co-expression network, each co-expression edge has to be inferred by accepting or rejecting the independence hypothesis from the sample covariance matrix of respective isoform expressions.

We analyze a data set of liver cancer, which is downloaded from TCGA data portal: https://tcga-data.nci.nih.gov/tcga/dataAccessMatrix.htm, and filtered by data types RNASeqV2 and Level 3. The data set consists of 38 genes with their dimensions ranging from 1 to 31 (see Table \ref{genes}) and their sample size is $N=50$. Obviously the dimensions are not on the same order of magnitude as their sample size. For these genes, the relationship of dependency are totally known based on established knowledge from historical experiments: 29 pairs of them are dependent and the remaining 674 pairs are independent.
\begin{table}[h]
\footnotesize
\caption{Lung cancer data: 38 genes with different dimensions}
\centering
\begin{tabular}{|c|cccccc|}
\hline
Name         & NM000222&NM000321&NM000636& NM000791&NM001126116& NM001140\\
Dimension    & 20      & 27     &4       & 6       & 7         & 13      \\
\hline
Name         & NM001145102&NM001204191&NM001237&NM001429& NM001759& NM001760\\
Dimension    & 9          & 7         & 8      & 31     & 5       & 4       \\
\hline
Name         &NM001786& NM001880&  NM001950& NM002198& NM002228& NM002421\\
Dimension    & 4      & 13      &  10      & 9       & 1       & 10      \\
\hline
Name         & NM002467&NM002505&NM002539&NM002985&NM003109&NM003153\\
Dimension    & 3       & 10     & 11     & 3      & 6      & 22     \\
\hline
Name         &  NM003221&NM003998&NM004379&NM004417& NM005194& NM005238\\
Dimension    & 7        & 23     &  8     & 2      & 1       & 8       \\
\hline
Name          & NM005239& NM005252&NM005438&NM007122& NM022457&NM033285\\
Dimension     & 10      & 2       & 4      & 11     & 20      & 4      \\
\hline
Name          &NM053056& NM198253&&&&\\
Dimension     & 5      &15       &&&&\\
\hline
\end{tabular}
\label{genes}
\end{table}

We test the pairwise gene dependencies using $T_n$ and compare the results with those from two other methods: one is from \cite{HCJX13}, which is a variant of traditional canonical correlation analysis (CCA); the other is the large-dimensional trace criterion $T_2$, which is recently applied in \cite{YLW14} and is demonstrated better than CCA. The corrected LRT $T_1$ is excluded from this comparison since its dimensional requirement is not met for the data set. The significance level is set to be $\alpha=0.05$. To evaluate the accuracy of the test results, we employ the so called F-score \citep{P07}
which actually measures the trade-off between precision $P$ and recall $R$:
\begin{eqnarray}
F=2\times \frac{P\times R}{P+R},
\end{eqnarray}
where
\begin{eqnarray*}
P=\frac{true \ positives}{true \ positives + false \ positives},\quad R=\frac{true \ positives}{true \ positives + false \ negatives}.
\end{eqnarray*}
With the prior information of dependency, the {\em true positives} stands for the number of correctly identified correlated pairs of genes, the {\em false positive} is the number of misidentified correlated pairs of genes, and the {\em false negatives} is the number of misidentified uncorrelated pairs of genes.

The F-scores reported in Table \ref{F1} show that $T_n$ outperforms $T_2$ significantly. CCA fails to detect the relationship between gene NM002228 and other genes due to the dimension of this gene is 1. The same phenomenon happens to gene NM005195. Therefore, we cannot get F-score for CCA.

\begin{table}[h]
\caption{F-scores for the data set including 38 genes.}
\centering
\begin{tabular}{cccc}
\hline
Method & $T_n$ & $T_2$ & CCA\\
F-score & 0.64 & 0.40 & NA\\
\hline
\end{tabular}
\label{F1}
\end{table}

Next, we remove the 1-dimensional genes from the data set in order to incorporate CCA for comparison. The remaining 36 genes include 25 dependent pairs and 605 independent pairs.  The F-scores collected in Table \ref{F2} demonstrate that $T_n$ again outperforms the others. Notice that such results on pairwise dependence of gene isoform expressions are further used to construct gene co-expression networks, see \cite{YLW14}.

\begin{table}[h]
\caption{F-scores for the data set including 36 genes.}
\centering
\begin{tabular}{cccc}
\hline
Method & $T_n$ & $T_2$ & CCA \\
F-score & 0.6465 & 0.4238 & 0.4187 \\
\hline
\end{tabular}
\label{F2}
\end{table}

\section{Concluding remarks}
\label{sec:conc}
This paper investigates the independence test of two vectors in a high-dimensional situation where one of the dimensions $p_1$ is quite small while the other dimension $p_2$ is large compared to the sample size. The asymptotic scheme is novel and practically useful. A new procedure is introduced and the test statistic under the null is proved to be asymptotically normal distributed assuming that $p_1+p_2\rightarrow\infty$ and the vectors are normal distributed. The power of the proposed test is studied through Monte-Carlo simulations and a real data analysis, which demonstrates the superiority of the new test over the existing ones. Another interesting feature found in the Monte-Carlo study is that the proposed procedure is robust against deviations from the normality assumption on the vectors although a theoretic proof of this fact is still missing.

\section{Proofs}

\subsection{Lemma}
\begin{lemma}\label{lemma1}
Let $\bu$, $\bv$, and $\w$ be independent vectors of $n$-dimensional standard normal distribution $N(0, I_n)$, and  define
\begin{equation}\label{psi}
\psi(\x, \y)=\frac{1}{n}(\x'\y)^2-\frac{1}{n^2}(\x'\x)(\y'\y),
\end{equation}
then
\begin{eqnarray*}
&& \E[\psi(\bu, \bv)|\bu]=0,\quad\E[\psi(\bu, \bv)\psi(\w,\bv)|\bu,\w]=\frac{2}{n}\psi(\bu,\w),\\
&& \E[\psi(\bv, \bv)]=(n-1)(n+2)/n,\quad\E[\psi^2(\bu, \bv)]=2(n-1)(n+2)/n^2,\\
&& \E[\psi^2(\bv, \bv)]=O(n^2),\quad \Var[\psi^2(\bv, \bv)]=O(n), \quad\E[\psi^4(\bu,\bv)]=O(1),
\end{eqnarray*}
as $n\rightarrow\infty.$
\end{lemma}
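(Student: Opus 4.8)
The plan is to exploit the rotational invariance of the standard Gaussian to reduce every expectation either to moments of independent chi-square variables or, for the one identity that couples two distinct directions, to a direct application of Isserlis' (Wick's) formula. Throughout I would condition on the vector whose argument is repeated.

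For the three quantities in which $\bv$ enters through a single direction --- $\E[\psi(\bu,\bv)\mid\bu]$, $\E[\psi^2(\bu,\bv)]$, and $\E[\psi^4(\bu,\bv)]$ --- I would condition on $\bu$ and decompose $\bv=g\,\hat{\bu}+\bv_\perp$, where $\hat{\bu}=\bu/\|\bu\|$, $g=\hat{\bu}'\bv\sim N(0,1)$, and $\bv_\perp$ is the projection onto $\hat{\bu}^{\perp}$ with $h:=\|\bv_\perp\|^2\sim\chi^2_{n-1}$ independent of $g$. Writing $r=\bu'\bu$, this gives the clean representation
\begin{equation*}
\psi(\bu,\bv)=\frac{r}{n^2}\bigl[(n-1)g^2-h\bigr].
\end{equation*}
Since $\E[g^2]=1$ and $\E[h]=n-1$, the bracket is centered, so $\E[\psi(\bu,\bv)\mid\bu]=0$. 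Squaring and using $\E[g^4]=3$, $\E[h^2]=(n-1)(n+1)$ gives $\E\bigl[((n-1)g^2-h)^2\bigr]=2n(n-1)$, and averaging over $\bu$ with $\E[r^2]=n(n+2)$ returns $\E[\psi^2(\bu,\bv)]=2(n-1)(n+2)/n^2$. For the fourth moment I would expand $((n-1)g^2-h)^4$ and insert $\E[g^{2k}]=(2k-1)!!$ and $\E[h^k]=\prod_{j=0}^{k-1}(n-1+2j)$ for $k\le4$; the top-order coefficients combine to a constant times $n^4$, so $\E[((n-1)g^2-h)^4]=O(n^4)$, and together with $\E[r^4]=O(n^4)$ the prefactor $r^4/n^8$ forces $\E[\psi^4(\bu,\bv)]=O(1)$.

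For the cross identity the arguments $\bu$ and $\w$ lie along different directions, so instead I would compute directly over $\bv$ by Isserlis' formula. Expanding the product into four pieces and using $\E[(\bu'\bv)^2(\w'\bv)^2]=(\bu'\bu)(\w'\w)+2(\bu'\w)^2$, $\E[(\bu'\bv)^2\,\bv'\bv]=(\bu'\bu)(n+2)$, and $\E[(\bv'\bv)^2]=n(n+2)$, the $(\bu'\bu)(\w'\w)$ contributions combine into $-\tfrac{2}{n^3}(\bu'\bu)(\w'\w)$, which with the residual $\tfrac{2}{n^2}(\bu'\w)^2$ equals exactly $\tfrac{2}{n}\psi(\bu,\w)$. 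Setting $\bu=\w$ recovers $\E[\psi^2(\bu,\bv)]$ as a consistency check.

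Finally, the three statements about $\psi(\bv,\bv)$ all reduce to moments of one chi-square variable, since $\psi(\bv,\bv)=\tfrac{n-1}{n^2}(\bv'\bv)^2$ with $\bv'\bv\sim\chi^2_n$ and $\E[(\bv'\bv)^k]=\prod_{j=0}^{k-1}(n+2j)$. The mean is $\tfrac{n-1}{n^2}\,n(n+2)=(n-1)(n+2)/n$; the second moment is $\tfrac{(n-1)^2}{n^4}\E[(\bv'\bv)^4]=O(n^2)$; and the variance follows from $\E[(\bv'\bv)^8]-\E[(\bv'\bv)^4]^2$, in which the leading powers cancel so that the order is set by the next coefficients of these products of evenly spaced factors. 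The main obstacle I anticipate is purely organizational: both the fourth-moment bound and the variance estimate rely on the cancellation of top-order terms in differences of products of arithmetic-progression factors, so the work lies in bookkeeping the subleading coefficients rather than in any conceptual step. The representation $\psi(\bu,\bv)=\tfrac{r}{n^2}[(n-1)g^2-h]$ is what keeps that bookkeeping manageable.
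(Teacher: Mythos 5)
Your strategy is sound and, for six of the seven claims, it delivers exactly the stated results. It also differs from the paper's route in one genuine way: for the single-pair quantities you reduce everything to the representation $\psi(\bu,\bv)=\tfrac{r}{n^{2}}\bigl[(n-1)g^{2}-h\bigr]$ with $g\sim N(0,1)$ and $h\sim\chi^{2}_{n-1}$ independent of each other and of $r=\bu'\bu$, whereas the paper works with the conditional law $\bu'\bv\mid\bu\sim N(0,\bu'\bu)$ directly, obtains $\E[\psi^{2}(\bu,\bv)]$ as a corollary of the cross identity by setting $\w=\bu$ and averaging (i.e.\ $\E[\psi^{2}(\bu,\bv)]=\tfrac{2}{n}\E[\psi(\bu,\bu)]$), and bounds $\E[\psi^{4}(\bu,\bv)]$ by a one-line Minkowski estimate on the original expression. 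Your decomposition buys exact constants and makes the independence structure explicit; the paper's Minkowski bound avoids the bookkeeping you anticipate for the fourth moment, and in fact no cancellation is needed there at all --- each of the five terms in the expansion of $\bigl((n-1)g^{2}-h\bigr)^{4}$ is already $O(n^{4})$ individually, so the crude triangle inequality suffices. The cross identity $\E[\psi(\bu,\bv)\psi(\w,\bv)\mid\bu,\w]=\tfrac{2}{n}\psi(\bu,\w)$ is handled identically in both arguments, and your coefficient count ($\tfrac{1}{n^{2}}-\tfrac{2(n+2)}{n^{3}}+\tfrac{n+2}{n^{3}}=-\tfrac{2}{n^{3}}$ on $(\bu'\bu)(\w'\w)$) is correct.

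The one item where your plan cannot succeed as written is $\Var[\psi^{2}(\bv,\bv)]=O(n)$, because that claim is false as literally stated. Your proposed computation is the right one for that quantity, but it returns a larger order: $\Var[\psi^{2}(\bv,\bv)]=\tfrac{(n-1)^{4}}{n^{8}}\bigl[\E(\bv'\bv)^{8}-(\E(\bv'\bv)^{4})^{2}\bigr]$, and since $\E(\bv'\bv)^{8}=\prod_{j=0}^{7}(n+2j)=n^{8}+56n^{7}+O(n^{6})$ while $(\E(\bv'\bv)^{4})^{2}=n^{8}+24n^{7}+O(n^{6})$, the bracket is $32n^{7}+O(n^{6})$ and the variance is of exact order $n^{3}$, not $n$. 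The exponent in the lemma is a typo: what is needed (and what the paper's own displayed computation actually evaluates, despite its label) is $\Var[\psi(\bv,\bv)]=\tfrac{(n-1)^{2}}{n^{4}}\bigl[\E(\bv'\bv)^{4}-(\E(\bv'\bv)^{2})^{2}\bigr]=\tfrac{(n-1)^{2}}{n^{4}}\cdot 8n(n+2)(n+3)=O(n)$; this is the bound used downstream to get $\Var(A_{1n})=O(1/n)$ for $A_{1n}=\tfrac{2}{np_{1}p_{2}}\sum_{i}b_{ii}\psi(\bu_{i},\bu_{i})$. You should therefore prove the $O(n)$ bound for $\Var[\psi(\bv,\bv)]$ (a direct consequence of your chi-square moment formula) rather than try to force it for $\Var[\psi^{2}(\bv,\bv)]$.
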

\begin{proof}
The distribution of $\bv'\bv$ is $\chi^2(n)$ and the conditional distribution of $\bu'\bv|\bu$ is $N(0,\bu'\bu)$,
thus ${\rm E}[\psi(\bu, \bv)|\bu]=0$. Write
\begin{eqnarray*}
\psi(\bu, \bv)\psi(\w, \bv)&=&\frac{1}{n^2}(\bu'\bv)^2(\w'\bv)^2-\frac{1}{n^3}(\bu'\bv)^2(\w'\w)(\bv'\bv)\\
&&-\frac{1}{n^3}(\w'\bv)^2(\bu'\bu)(\bv'\bv)+\frac{1}{n^4}(\bu'\bu)(\w'\w)(\bv'\bv)^2\\
&:=&\frac{1}{n^2}S_1-\frac{1}{n^3}S_2-\frac{1}{n^3}S_3+\frac{1}{n^4}S_4.
\end{eqnarray*}
Then $\E(S_4|\bu,\w)=n(n+2)(\bu'\bu)(\w'\w)$, and
\begin{eqnarray*}
\E(S_1|\bu,\w)
&=&\sum_{i,j,k,l}u_iu_jw_kw_l\E(v_iv_jv_kv_l)\\
&=&\sum_{i=j, k=l}u_iu_jw_kw_l+\sum_{i=k, j=l}u_iu_jw_kw_l+\sum_{i=l, j=k}u_iu_jw_kw_l\\
&=&(\bu'\bu)(\w'\w)+2(\bu'\w)^2,\\
\E(S_2|\bu,\w)&=&(\w'\w)\sum_{i,k}u_i^2\cdot\E(v_i^2v_k^2)=(n+2)(\bu'\bu)(\w'\w),
\end{eqnarray*}
and thus $\E(S_3|\bu,\w)=\E(S_2|\bu,\w)$, where $(x_i)$ denote the elements of $\x$.
Collecting these results, we get $\E[\psi(\bu, \bv)\psi(\w,\bv)|\bu,\w]=(2/n)\psi(\bu,\w)$.

Notice that $\psi(\bv,\bv)=(n-1)(\bv'\bv)^2/n^2$, and $\E(\bv'\bv)^k=n(n+2)\cdots(n+2k-2)$, $k\in\mathbb N^+$.
We have then,
\begin{eqnarray*}
\E[\psi(\bv,\bv)]&=&(n-1)(n+2)/n,\\
\E[\psi^2(\bu, \bv)]&=&(2/n)\E[\psi(\bu,\bu)]=2(n-1)(n+2)/n^2,\\
\E[\psi^2(\bv, \bv)]&=&\E(\bv'\bv)^4(n-1)^2/n^4=O(n^2),\\
\Var[\psi^2(\bv, \bv)]&=&\left[\E(\bv'\bv)^4-\E^2(\bv'\bv)^2\right](n-1)^2/n^4=O(n).
\end{eqnarray*}
Finally, from Minkowski inequality,
\begin{eqnarray*}
\E[\psi^4(\bu,\bv)]&=&\frac{1}{n^4}\E[(\bu'\bv)^2-(\bu'\bu)(\bv'\bv)/n]^4\\
&\leq&\frac{1}{n^4}\left\{\left[\E(\bu'\bv)^8\right]^{\frac{1}{4}}+\left[\E(\bu'\bu)^4\E(\bv'\bv)^4\right]^{\frac{1}{4}}/n\right\}^4\\
&=&\frac{1}{n^4}\left\{\left[\E(\bv'\bv)^4\right]^{\frac{1}{4}}+\left[\E(\bv'\bv)^4\right]^{\frac{1}{2}}/n\right\}^4,
\end{eqnarray*}
which is $O(1)$ as $n\rightarrow\infty.$
\end{proof}

\subsection{Proof of Theorem \ref{th1}}

The sample covariance $S_n$ has the Wishart distribution $W_n(\Sigma)$ with $n$ degrees of freedom. It can be expressed as $\sum_{i=1}^{n}\tilde \z_k\tilde \z_k'/n$ where ($\tilde \z_i$) are i.i.d. $N(0,\Sigma)$.
Write $\tilde \z_i=(\tilde \x_{i}', \tilde \y_{i}')'=(\tilde x_{i1},\ldots,\tilde x_{ip_1}, \tilde y_{i1},\ldots,\tilde y_{ip_2})'$, $i = 1,\ldots{n}$, and denote $\X=(\tilde\x_1,\ldots,\tilde\x_{n})$ and $\Y=(\tilde\y_1,\ldots,\tilde\y_{n})$.
Note that the matrices $\X$ and $\Y$ contain normal vectors which are independent under $H_0$. The matrices $\X'\X$ and $\Y'\Y$ can be standardized as
\begin{eqnarray*}
\X'\X=\sum_{i=1}^{p_1}\alpha_i\bu_i\bu_i',\quad\Y'\Y=\sum_{j=1}^{p_2}\beta_j\bv_j\bv_j',
\end{eqnarray*}
where $(\alpha_i)$ and $(\beta_j)$ are the eigenvalues of $\Sigma_{xx}$ and $\Sigma_{yy}$, respectively, and $(\bu_i), (\bv_j)$ are i.i.d. $N(0, I_n)$.
Therefore, we have
\begin{eqnarray*}
\frac{n}{k_n}\hat\gamma_{xy}&=&n\tr(S_{xy}S_{yx})-\tr(S_{xx})\tr(S_{yy})\\
&=&\frac{1}{n}\tr(\X'\X\Y'\Y)-\frac{1}{n^2}\tr(\X'\X)\tr(\Y'\Y)\\
&=&\sum_{i=1}^{p_1}\sum_{j=1}^{p_2}\alpha_i\beta_j\left[\frac{1}{n}(\bu_i'\bv_j)^2
-\frac{1}{n^2}(\bu_i'\bu_i)(\bv_j'\bv_j)\right]\\
&:=&\sum_{i=1}^{p_1}\sum_{j=1}^{p_2}a_{ij}\psi(\bu_i,\bv_j),\label{asy1}
\end{eqnarray*}
where $a_{ij}=\alpha_i\beta_j$ and $\psi$ is defined in \eqref{psi} with the dimension n, $i=1,\ldots,p_1,\ j=1,\ldots,p_2.$

We use the martingale CLT to establish the limiting distribution of $T_n$. Without loss of generality suppose that $p_1\leq p_2$, and define
$\phi_j^{(n)}=(1/\sqrt{p_1p_2})\sum_{i=1}^{p_1}a_{ij}\psi(\bu_i,\bv_j)$, $j=1,\ldots,p_2$. Let $\mathcal F_j^{(n)}$ be the $\sigma$-algebra generated by the random variables $\{\bu_1,\ldots,\bu_{p_1},\bv_1,\ldots,\bv_{j}\}$, then $\{\emptyset, \Omega\}=\mathcal F_0\subset\mathcal F_1^{(n)}\subset\cdots\subset \mathcal F_{p_2}^{(n)}\subset \mathcal F$ with $(\Omega, \mathcal F, P)$ the probability space. From Lemma \ref{lemma1} and the law of iterated expectations,
\begin{eqnarray*}
\E\left[\phi_j^{(n)}\bigg|\mathcal F_{j-1}^{(n)}\right]
&=&\frac{1}{\sqrt{p_1p_2}}\sum_{i=1}^{p_1}a_{ij}\E(\psi(\bu_i,\bv_j)|\bu_i)=0,\\
\E\left[\phi_j^{(n)}\right]^2
&=&\frac{1}{p_1p_2}\sum_{i=1}^{p_1}\sum_{k=1}^{p_1}a_{ij}a_{kj}\E[\psi(\bu_i,\bv_j)\psi(\bu_k,\bv_j)]\\
&=&\frac{2(n-1)(n+2)}{n^2p_1p_2}\sum_{i=1}^{p_1}a_{ij}^2,
\end{eqnarray*}
which is $O(1/p_2)$ as $(p, n)\rightarrow\infty$. Thus $\{\psi_j^{(n)}, \mathcal F_j^{(n)}\}$ forms a sequence of integrable martingale differences.
On the other hand,
\begin{eqnarray*}
\sum_{j=1}^{p_2}\E\left[\left(\phi_j^{(n)}\right)^2\bigg|\mathcal F_{j-1}^{(n)}\right]
&=&\frac{1}{p_1p_2}\sum_{j=1}^{p_2}\sum_{i=1}^{p_1}\sum_{k=1}^{p_1}a_{ij}a_{kj}\E\left(\psi(\bu_i,\bv_j)\psi(\bu_k,\bv_j)|\bu_i, \bu_k\right)\\
&=&\frac{2}{np_1p_2}\sum_{i=1}^{p_1}\sum_{k=1}^{p_1}b_{ij}\psi(\bu_i,\bu_k)\\
&=&\frac{2}{np_1p_2}\sum_{i=1}^{p_1}b_{ii}\psi(\bu_i,\bu_i)+
\frac{2}{np_1p_2}\sum_{i\neq k}b_{ik}\psi(\bu_i,\bu_k)\\
&:=& A_{1n}+A_{2n},
\end{eqnarray*}
where $b_{ik}=\sum_{j=1}^{p_2}a_{ij}a_{kj}$, $i, k=1,\ldots, p_1$. Considering the variances of $A_{1n}$ and $A_{2n}$,
$\Var(A_{1n})=O(1/n)$ and
\begin{eqnarray*}
\Var(A_{2n})&=&\frac{4}{n^2p_1^2p_2^2}\sum_{i\neq k}\sum_{l\neq s}b_{ik}b_{ls}\E[\psi(\bu_i,\bu_k)
\psi(\bu_l,\bu_s)]\\
&=&\frac{8}{n^2p_1^2p_2^2}\sum_{i\neq k}b^2_{ik}\E[\psi^2(\bu_i,\bu_k)],
\end{eqnarray*}
which is $O(1/n^2)$. Therefore, from the Chebyshev inequality,
$$
\sum_{j=1}^{p_2}\E\left[\left(\phi_j^{(n)}\right)^2\bigg|\mathcal F_{j-1}^{(n)}\right]-\sum_{j=1}^{p_2}\E\left(\phi_j^{(n)}\right)^2
\xrightarrow{p}0,\quad\text{as}\quad (p, n)\rightarrow\infty,
$$
where the second expectation has expression $s_n^2:=2(1-1/n)(1+2/n)\sum_{i=1}^{p_1}\sum_{j=1}^{p_2}a_{ij}^2/(p_1p_2)$.

Next we verify Lyapunov condition by showing that $ B_n=\sum_{j=1}^{p_2}\E(\phi_j^{(n)})^4\rightarrow0$.
From Lemma \ref{lemma1} and the law of iterated expectations,
\begin{eqnarray*}
B_n&=&\frac{1}{p^2_1p^2_2}\sum_{j=1}^{p_2}\sum_{i=1}^{p_1}\sum_{l=1}^{p_1}\sum_{s=1}^{p_1}\sum_{t=1}^{p_1}a_{ij}a_{lj}a_{sj}a_{tj}
\E[\psi(\bu_i,\bv_j)\psi(\bu_l,\bv_j)\psi(\bu_s,\bv_j)\psi(\bu_t,\bv_j)]\\
&=&\frac{1}{p^2_1p^2_2}\sum_{j=1}^{p_2}\sum_{i=1}^{p_1}a^4_{ij}
\E[\psi^4(\bu_i,\bv_j)]+\frac{3}{p^2_1p^2_2}\sum_{j=1}^{p_2}\sum_{i\neq s}a^2_{ij}a^2_{sj}
\E[\psi^2(\bu_i,\bv_j)\psi^2(\bu_s,\bv_j)]\\
&=&\frac{1}{p^2_1p^2_2}\sum_{j=1}^{p_2}\sum_{i=1}^{p_1}a^4_{ij}
\E[\psi^4(\bu_i,\bv_j)]+\frac{12}{n^2p^2_1p^2_2}\sum_{j=1}^{p_2}\sum_{i\neq s}a^2_{ij}a^2_{sj}
\E[\psi^2(\bv_j,\bv_j)],
\end{eqnarray*}
which is $O(1/p_2)$ as $(p, n)\rightarrow\infty.$

Notice that $\hat \gamma_{xx}$ and $\hat \gamma_{yy}$ are unbiased and consistent estimators of $\gamma_{xx}$ and $\gamma_{yy}$, respectively.
The statistic $\hat s_n^2:=2(1-1/n)(1+2/n)\hat \gamma_{xx}\hat \gamma_{yy}/(p_1p_2)$ is also an unbiased and consistent estimator of $s_n^2$ under the null hypothesis, therefore
$$
\frac{n}{\sqrt{2k_n}}\frac{\hat\gamma_{xy}}{\sqrt{\hat \gamma_{xx}\hat \gamma_{yy}}}=\frac{1}{\hat s_n}\sum_{j=1}^{p_2}\phi_{j}^{(n)}\xrightarrow{d} N(0,1),\quad\text{as}\quad (p, n)\rightarrow\infty.
$$

\end{document}